\newtheorem{thm}{Theorem}[section]
\begin{document}

\title{Technical report\\The computation of Greeks with \\Multilevel Monte Carlo}
\date{December 2010}
\author{Sylvestre Burgos, M.B.~Giles\\Oxford-Man Institute of Quantitative Finance\\University of Oxford}


\maketitle

\abstract{We study the use of the multilevel Monte Carlo technique \cite{burgos:giles07b,burgos:giles08b} \linebreak in the context of the calculation of Greeks.
The pathwise sensitivity analysis \cite{burgos:glasserman04} differentiates the path evolution and reduces the  payoff's smoothness.
This leads to new challenges: 
the inapplicability of pathwise sensitivities to non-Lipschitz payoffs often makes the use of naive algorithms  impossible.\\
These challenges can be addressed in three different ways:
payoff smoothing using conditional expectations of the payoff before maturity \cite{burgos:glasserman04};
approximating the previous technique with path splitting for the final timestep \cite{burgos:ag07};
using of a hybrid combination of pathwise sensitivity and the Likelihood Ratio Method \cite{burgos:giles09}.
We investigate the strengths and weaknesses of these alternatives in different multilevel Monte Carlo settings.}

\section{Introduction}
\label{burgos:1}

In mathematical finance, Monte Carlo methods are used to compute the price of an option by estimating the expected value $\mathbb{E}(P)$. $P$ is the payoff function that depends on an underlying asset's scalar price $S(t)$ which satisfies an evolution SDE of the form
\begin{equation}
\mathrm{d} S(t)= a(S,t)\, \mathrm{d} t+ b(S,t)\, \mathrm{d} W_t, 
\quad 0 \leq t \leq T, 
\quad S(0) \textrm{ given}, \label{burgos:underlyingSDE}
\end{equation}
This is just one use of Monte Carlo in finance. In practice the prices are often quoted and used to calibrate our market models; the option's sensitivities to market parameters, the so-called Greeks, reflect the exposure to different sources of risk. Computing these is essential to hedge portfolios and is therefore even more important than pricing the option itself. This is why our research focuses on getting fast and accurate estimates of Greeks through Monte Carlo simulations.

\subsection{Multilevel Monte Carlo}
\label{burgos:MultilevelMonteCarlo}

Let us consider a standard Monte Carlo method using a discretisation with first order weak convergence (e.g. the Milstein scheme). Achieving a root-mean square error of $O(\epsilon)$ requires a variance of order $O(\epsilon^2)$, hence $O(\epsilon^{-2})$ independent paths. It also requires a discretisation bias of order $O(\epsilon^{-1})$, thus $O(\epsilon^{-1})$ timesteps, giving a total computational cost $O(\epsilon^{-3})$.

Giles' multilevel Monte Carlo technique reduces this cost to $O(\epsilon^{-2})$ under certain conditions. 
The idea is to write the expected payoff with a fine discretisation using $2^{-L}$ uniform timesteps as a telescopic sum.
Let $\hat P_l$ be the simulated payoff with a discretisation using $2^l$ uniform timesteps,
\begin{equation}
\mathbb{E}(\hat P_L)=\mathbb{E}(\hat P_0)+
\sum \limits_{l=1}^L \mathbb{E}(\hat P_l-\hat P_{l-1})
\label{burgos:telescope}
\end{equation}
We then use Monte Carlo estimators using $N_l$ independent samples
\begin{equation}
\mathbb{E}(\hat P_l-\hat P_{l-1}) \approx \hat Y_l =
\frac{1}{N_l} \sum \limits_{i=1}^{N_l}
(\hat P_l^{(i)}-\hat P_{l-1}^{(i)})
\label{burgos:MultilevelEstimator}
\end{equation}
The small corrective term $\hat P_l^{(i)}-\hat P_{l-1}^{(i)}$ comes from the difference between a fine and a coarse discretisation of the same leading Brownian motion. Its magnitude depends on the strong convergence properties of the scheme used. Let $V_l$ be the variance of a single sample $\hat P_l^{(i)}-\hat P_{l-1}^{(i)}$. The next theorem shows that what determines the efficiency of the multilevel approach is the convergence speed of $V_l$ as $l \rightarrow \infty$.

To ensure a better efficiency we may modify \eqref{burgos:MultilevelEstimator} and use different estimators of $\hat P$ on the fine and coarse levels  of $\hat Y_l$ as long as the telescoping sum property is respected, that is
\begin{equation}
\mathbb{E}(\hat P_L)=\mathbb{E}(\hat P_0)+
\sum \limits_{l=1}^L \mathbb{E}(\hat P_l^{f}-\hat P_{l-1}^{c}) \label{burgos:NewMultilevelEstimator}
\end{equation}
with
\begin{equation}
\mathbb{E}(\hat P_l^{f}) = \mathbb{E}(\hat P_{l}^{c}).
\label{burgos:TelescopeCondition}
\end{equation}

\begin{thm}
\label{burgos:complexity_thm}
Let $P$ be a function of a solution to \eqref{burgos:underlyingSDE} for a given Brownian path $W(t)$; let $\hat P_l$ be the corresponding approximation using the discretisation at level $l$, i.e. with $2^l$ steps of width $h_l=2^{-l} \, T$.

If there exist independent estimators $\hat Y_l$ of computational complexity $C_l$ based on $N_l$ samples and there are positive constants $ \alpha~\geq~\frac{1}{2}, \beta, c_1, c_2, c_3$ such that

\begin{enumerate}
\item $\displaystyle \mathbb{E}(\hat Y_l)=\left\{ \begin{array}{ll}
 \mathbb{E}(\hat P_0) &\mbox{ if }l=0 \\
 \mathbb{E}(\hat P_l- \hat P_{l-1}) &\mbox{ if }l >0\\
 \end{array} \right.$
\item $\displaystyle |\mathbb{E}(\hat P_l - P)| \leq c_1 h_l^{\alpha}$
\item $\displaystyle \mathbb V (\hat Y_l) \leq c_2 h_l^{\beta}\,N_l^{-1}$
\item $\displaystyle C_l \leq c_3 N_l \,h_l^{-1}$
\end{enumerate}

Then there is a constant $c_4$ such that for any $\epsilon < e^{-1}$, there are values for $L$ and $N_l$ resulting in a multilevel estimator $\hat Y = \displaystyle \sum_{l=0}^{L} \hat Y_l$ with a mean-square-error $MSE~=~\mathbb E((\hat Y -~\mathbb E(P))^2) < \epsilon ^2$ with a complexity $C$ bounded by

\begin{equation}
\label{burgos:convergence_speed}
C \leq 
\left\{ \begin{array}{ll}
\displaystyle c_4 \epsilon^{-2}&\mbox{ if }\beta >1 \\
\displaystyle c_4 \epsilon^{-2} \left( \log \epsilon \right)^2&\mbox{ if }\beta=1\\
\displaystyle c_4 \epsilon^{-2-(1-\beta)/\alpha}&\mbox{ if }0< \beta < 1\\
\end{array} \right.
\end{equation}
\end{thm}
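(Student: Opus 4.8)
The plan is to split the mean-square error into a variance term and a squared-bias term, control each at the level of $\epsilon^2/2$, and then minimise the resulting cost by an optimal allocation of the sample sizes $N_l$. First I would use assumption 1 together with the telescoping identity \eqref{burgos:telescope} to obtain $\mathbb{E}(\hat Y)=\sum_{l=0}^L \mathbb{E}(\hat Y_l)=\mathbb{E}(\hat P_L)$, so the estimator is unbiased for $\mathbb{E}(\hat P_L)$ and the only error in the mean is the discretisation bias $\mathbb{E}(\hat P_L)-\mathbb{E}(P)$. By assumption 2 this is bounded by $c_1 h_L^{\alpha}$, so taking $L$ to be the smallest integer with $c_1 h_L^{\alpha}\leq \epsilon/\sqrt 2$ pushes the squared bias below $\epsilon^2/2$; since $h_l=2^{-l}T$ this yields $h_L=\Theta(\epsilon^{1/\alpha})$ and $L=\Theta(\log \epsilon^{-1})$.

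For the variance, independence of the $\hat Y_l$ gives $\mathbb V(\hat Y)=\sum_{l=0}^L \mathbb V(\hat Y_l)\leq c_2\sum_{l=0}^L h_l^{\beta}N_l^{-1}$ by assumption 3. The core step is to choose the $N_l$ to minimise the cost $C\leq c_3\sum_l N_l h_l^{-1}$ (assumption 4) subject to this sum being at most $\epsilon^2/2$. A Lagrange-multiplier (or Cauchy--Schwarz) argument gives the optimal rule $N_l\propto h_l^{(\beta+1)/2}$; concretely I would set $N_l=\lceil 2c_2\epsilon^{-2}K\,h_l^{(\beta+1)/2}\rceil$ with $K=\sum_{l=0}^L h_l^{(\beta-1)/2}$, which makes the variance sum collapse to exactly $\epsilon^2/2$. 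The elegant consequence of this choice is that both the per-level variance contribution and the per-level cost contribution scale as $h_l^{(\beta-1)/2}$, so the whole problem reduces to estimating the single geometric sum $K$.

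Substituting back, the dominant cost term becomes $2c_2c_3\epsilon^{-2}K^2$, plus a lower-order term $c_3\sum_l h_l^{-1}=O(h_L^{-1})=O(\epsilon^{-1/\alpha})$ arising from rounding $N_l$ up to integers. It then remains to evaluate $K$ in the three regimes. Writing $h_l^{(\beta-1)/2}\propto 2^{-l(\beta-1)/2}$: for $\beta>1$ the series is convergent and dominated by its first term, so $K=O(1)$ and $C=O(\epsilon^{-2})$; for $\beta=1$ every term is constant, so $K=O(L)=O(\log\epsilon^{-1})$ and $C=O(\epsilon^{-2}(\log\epsilon)^2)$; for $0<\beta<1$ the series grows geometrically and is dominated by its last term, giving $K=O(2^{L(1-\beta)/2})=O(\epsilon^{-(1-\beta)/(2\alpha)})$ and hence $C=O(\epsilon^{-2-(1-\beta)/\alpha})$.

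I expect the main obstacle to be the bookkeeping around the optimal $N_l$: verifying that the ceiling operation and the requirement $N_l\geq 1$ contribute only the lower-order cost $O(\epsilon^{-1/\alpha})$, and checking that this term never dominates the stated bounds. This last check is precisely where the hypothesis $\alpha\geq \tfrac12$ enters — it gives $1/\alpha\leq 2$, hence $\epsilon^{-1/\alpha}\leq \epsilon^{-2}$, which is absorbed into the leading term in all three cases since each leading term is at least of order $\epsilon^{-2}$. The restriction $\epsilon<e^{-1}$ serves to keep $\log\epsilon^{-1}\geq 1$ positive, so that the logarithmic factor in the $\beta=1$ case and the accumulated constants can be folded cleanly into a single $c_4$.
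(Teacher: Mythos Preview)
Your proposal is correct and is precisely the standard argument for this theorem. Note, however, that the paper itself does not prove the result: its entire proof is the citation ``See \cite{burgos:giles08b}'', deferring to Giles' original multilevel paper. Your sketch reproduces that argument faithfully --- the bias/variance split, the choice of $L$ to control discretisation error, the Lagrange-multiplier optimisation giving $N_l\propto h_l^{(\beta+1)/2}$, the reduction to the geometric sum $K=\sum_l h_l^{(\beta-1)/2}$, and the three-case evaluation --- so there is nothing to contrast beyond the fact that you have actually written out what the paper merely cites.
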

\begin{proof}
See \cite{burgos:giles08b}.
\end{proof}

We usually know $\alpha$ thanks to the literature on weak convergence. Results in \cite{burgos:kp92} give $\alpha=1$ for the Milstein scheme, even in the case of discontinuous payoffs. $\beta$ is related to strong convergence and is practically what determines the efficiency of the multilevel approach. Its value depends on the payoff shape and is usually not known \textit{a priori}.

\subsection{Monte Carlo Greeks}
\label{burgos:MonteCarloGreeks}

Let us briefly recall two classic methods used to compute Greeks in a Monte Carlo setting: the pathwise sensitivities and the Likelihood Ratio Method. More details can be found in \cite{burgos:glasserman04}.

\subsubsection*{Pathwise sensitivities}
Let $\displaystyle \hat{S}=(\hat{S}_n)_{n \in [0,N]}$ be the simulated values of the asset at the discretisation times and $\displaystyle \hat W=~(\hat W_n)_{n \in [1,N]}$ be the corresponding set of Brownian increments. Assuming that the payoff $P(\hat S)$ is Lipschitz, we can use the chain rule and write
\[
\frac{\partial \hat V}{\partial \theta} 
= \frac{\partial}{\partial \theta}  \displaystyle \int{P(\hat S(\theta, \hat W))\,  p(\hat W) \mathrm{d} \hat W}
= \int{\frac{\partial P(\hat S)}{\partial \hat S} \frac{ \partial \hat S(\theta, \hat W)}{\partial \theta} \, p(\hat W) \mathrm{d} \hat W}
\]
where $\mathrm{d}\hat{W} = \prod\limits_{k=1}^N {\mathrm{d}\hat{W}_k}$
and   $p(\hat{W}) = \prod\limits_{k=1}^N {p(\hat{W}_{k})}$ is a 
multivariate Normal probability density function.

We obtain $\displaystyle  \frac{\partial \hat S}{\partial \theta}$ by differentiating the discretisation of \eqref{burgos:underlyingSDE} with respect to $\theta$ and iterating the resulting formula.The limitation of this technique is that it requires the payoff to be Lipschitz and piecewise differentiable.

\subsubsection*{Likelihood Ratio Method}
\label{burgos:lrm}
The Likelihood Ratio Method consists in writing
\begin{equation} 
V=\mathbb{E}\left[ P(\hat S) \right] = \int P(\hat S) p(\theta,\hat S) \, \mathrm{d} \hat S 
\end{equation}
The dependence on $\theta$ comes through the probability density function $p(\theta,\hat S)$; assuming some conditions discussed in \cite{burgos:glasserman04}, we can write
\begin{eqnarray}
&\displaystyle \frac{\partial V}{\partial \theta} =  \int P(\hat{S}) \frac{\partial p(\hat{S})}{\partial \theta} \,\mathrm{d} \hat{S} =\int P(\hat{S}) \frac{\partial \log p(\hat{S})}{\partial \theta} p(\hat{S}) \,\mathrm{d}\hat{S} = \mathbb{E}\left[ P(\hat{S}) \frac{\partial \log p(\hat{S})}{\partial \theta} \right]
\\ \nonumber
& \textrm{with}\qquad \displaystyle  \mathrm{d}\hat{S} = \prod\limits_{k=1}^N {\mathrm{d}\hat{S}_k} \qquad \textrm{  and  }\qquad \displaystyle  p(\hat{S}) = \prod\limits_{k=1}^N {p(\hat{S}_{k}|\hat{S}_{k-1})}
\end{eqnarray}

The main limitation of the method is that the estimator's variance is $O(N)$, becoming infinite as we refine the discretisation.

\subsection{Multilevel Monte Carlo Greeks}

By combining the elements of sections \ref{burgos:MultilevelMonteCarlo} and \ref{burgos:MonteCarloGreeks} together, we write
\begin{equation}
\displaystyle \frac{\partial V}{\partial \theta}=\frac{\partial \mathbb{E}( P)}{\partial \theta} \approx \frac{\partial \mathbb{E}(\hat P_L)}{\partial \theta}=\frac{\partial \mathbb{E}(\hat P_0)}{\partial \theta}+\displaystyle\sum_{l=1}^{L} \frac{\partial \mathbb{E}(\hat P_l -\hat P_{l-1})}{\partial \theta}
\end{equation}
As in \eqref{burgos:MultilevelEstimator}, we define the multilevel estimators
\begin{equation}
\hat Y_0=N_0^{-1} \, \displaystyle \sum_{i=1}^{M} {\frac{\partial \hat P_0}{\partial \theta} }^{(i)}  \qquad
\textrm{and} \qquad
\hat Y_l=N_l^{-1} \, \displaystyle \sum_{i=1}^{N_l} ({\frac{\partial \hat P_l}{\partial \theta} }^{(i)} -{\frac{\partial \hat P_{l-1}}{\partial \theta} }^{(i)})
\end{equation}
where $\displaystyle  {\frac{\partial \hat P_0}{\partial \theta} }$, $\displaystyle  {\frac{\partial \hat P_{l-1}}{\partial \theta} }$, $\displaystyle  {\frac{\partial \hat P_l}{\partial \theta} }$ are computed with the techniques presented in section \ref{burgos:MonteCarloGreeks}.

\section{European call}
\label{burgos:2}
We first consider a Lipschitz payoff. That of the European call is 
\begin{equation}\nonumber P=(S_T-K)^+=\max(0,S_T-K)\end{equation}
 We illustrate the techniques by computing delta ($\delta$) and vega ($\nu$), the sensitivities to the asset's initial value $S_0$ and to its volatility $\sigma$.

\subsection{Pathwise sensitivities}
\label{burgos:pws21}
We consider the Black-Scholes model: the asset's evolution is modelled by a geometric Brownian motion
$ \mathrm{d} S(t)= r \, S(t) \mathrm{d} t+ \sigma \, S(t) \mathrm{d} W_t $.
We use the Milstein scheme for its good strong convergence properties. For  timesteps of width $h$,
\begin{equation}
\hat S_{n+1}=\hat S_n \cdot \left(1+r\,h+\sigma \, \Delta W_n + \frac{\sigma^2 }{2}\, (\Delta W_n^2-h)  \right) := \hat S_n \cdot D_n  \label{burgos:milstein}
\end{equation}
Since the payoff is Lipschitz, we can use pathwise sensitivities. The differentiation of equation \eqref{burgos:milstein} gives
\begin{equation}
\begin{cases}
\displaystyle  \frac{\partial \hat S_0}{\partial S_0}=1 , \quad \frac{\partial \hat S_{n+1}}{\partial S_0}=\frac{\partial \hat S_{n}}{\partial S_0} \cdot D_n\\
\displaystyle  \frac{\partial \hat S_0}{\partial \sigma}=0 , \quad \frac{\partial \hat  S_{n+1}}{\partial \sigma}=\frac{\partial \hat S_{n}}{\partial \sigma}\cdot D_n+\hat S_{n}\left(\Delta W_n+\sigma (\Delta W_n^2 - h)\right)
\end{cases}
\end{equation}
To compute $\hat Y_l$ we use a fine and a coarse discretisation with $N_f=2^l$ and $N_c=2^{l-1}$ uniform timesteps respectively. We denote these by the superscripts $^{(l)}$ and $^{(l-1)}$. We take $N_l$ samples to compute
\[
\hat Y_l = \frac{1}{N_l}\sum_{i=1}^{N_l} \left[ \left(
\frac{\partial P}{\partial S_{N_f}} 
\frac{\partial \hat S_{N_f}}{\partial \theta}^{(i)}
\right)^{(l)} - \left( 
\frac{\partial P}{\partial S_{N_c}} 
\frac{\partial \hat  S_{N_c}}{\partial \theta}^{(i)}
\right)^{(l-1)} \right]
\]
We use the same leading Brownian motion for the fine and coarse discretisations: we first generate the fine Brownian increments $\displaystyle \hat W=(\Delta W_1,\Delta W_2,\ldots,\Delta W_{N_f})$ and then use $\hat W^c=(\Delta W_1+\Delta W_2,\ldots,\Delta W_{N_f-1}+\Delta W_{N_f})$ as the coarse level's increments.

\subsubsection*{Estimated complexity and analysis}

Unless otherwise stated, the simulations used to illustrate this paper use the parameters $S_0=100, \, K=100, \, r=0.05, \, \sigma=0.20, \, T=1$.

In figure \ref{fig_PwS} we plot $\displaystyle (\mathbb{V}(\hat Y_l))$ as a function of $(1/h_l)$ in a log-log plot. We then measure the slopes for the different estimators: this gives a numerical estimate of the parameter $\beta$ in theorem~\ref{burgos:complexity_thm}. Combining this with the theorem, we get an estimated complexity of the multilevel algorithm (table \ref{table_PwS}). This gives the following results :

\begin{figure}[h!]
\centering
\caption{\label{fig_PwS} Pathwise sensitivities, European call : $\mathbb V (\hat Y_l)(l) $}
\includegraphics[width=.9\textwidth]{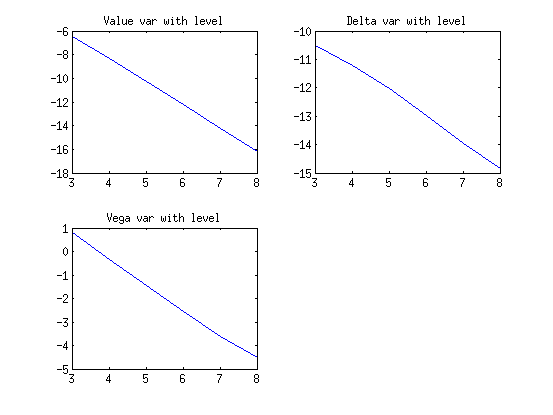}
\end{figure}

\begin{table}[h!]
\centering
\caption{\label{table_PwS} Pathwise sensitivities, European call : estimated complexity }
\begin{tabular}{|c|c|c|}
\hline \textrm{Estimator} & $\beta$ & \textrm{MLMC Complexity}\\ 
\hline \textrm{Value} & $\approx 2.0$ & $O(\epsilon^{-2})$\\ 
\hline \textrm{Delta} & $\approx 0.8$ & $O(\epsilon^{-2.2})$\\ 
\hline \textrm{Vega} & $\approx 1.0$ & $O(\epsilon^{-2}\log{\epsilon}^2)$\\ 
\hline 
\end{tabular} 
\end{table}

Giles has shown in \cite{burgos:giles07b} that $\beta\!=\!2$ for the value's estimator. For Greeks, the convergence is degraded by the discontinuity of $\frac{\partial P}{\partial S}=\mathbf{1}_{S>K}$: a fraction $O(h_l)$ of the paths has a final value $\hat S$ which is $O(h_l)$ from the discontinuity $K$. For these paths, there is a $O(1)$ probability that $\hat S_{N_f}^{(l)}$ and $\hat S_{N_c}^{(l-1)}$ are on different sides of the strike $K$, implying
$ \left(
\frac{\partial P}{\partial S_{N_f}}
\frac{\partial \hat S_{N_f}}{\partial \theta}
\right)^{(l)} - \left( 
\frac{\partial P}{\partial S_{N_c}}
\frac{\partial \hat  S_{N_c}}{\partial \theta}
\right)^{(l-1)} = O(1)$. Thus $\mathbb{V}(\hat Y_l)=O(h_l)$, and $\beta=1$ for the Greeks.

\subsection{Pathwise sensitivities and Conditional Expectations}
\label{burgos:pwscondexp}

We have seen that the payoff's lack of smoothness prevents the variance of Greeks' estimators $\hat Y_l$ from decaying quickly and limits the potential benefits of the multilevel approach.
To improve the convergence speed, we can use 
conditional expectations \cite{burgos:glasserman04}.  Instead of simulating the whole path, we stop at the penultimate step and then for every fixed set $\displaystyle \hat W=\left(\Delta W_k \right)_{k \in [1,N-1]}$, we consider the full distribution of $\left( \hat{S}_N | \hat W \right)$. 
With $a_n=a \left( \hat{S}_{N-1}(\hat W) , (N-1) h \right)$ and $b_n=b \left( \hat{S}_{N-1}(\hat W), (N-1)h \right) $, we can write
\begin{equation}
\hat{S}_N(\hat W,\hat W_N) = \hat{S}_{N-1}(\hat W)+a_n(\hat W) h + b_n(\hat W)\, \hat W_N
\end{equation}
We hence get a normal distribution for $\left( \hat{S}_N | \hat W \right)$.
\begin{equation} 
p(\hat{S}_N | \hat W) = \frac{1}{\sigma_{\hat W} \sqrt{2 \pi}} \, \exp \left( - \frac{ \left( \hat{S}_N - \mu_{\hat W} \right)^2 }{2 \sigma_{\hat W}^2}\right) \label{burgos:probadis}
\end{equation}
\begin{equation}  \nonumber \textrm{with }\left\{
    \begin{array}{ll}
        \displaystyle \mu_{\hat W}=\hat{S}_{N-1}+a \left( \hat{S}_{N-1} , (N-1) h \right) h \\
        \displaystyle \sigma_{\hat W}=b \left( \hat{S}_{N-1}, (N-1)h \right) \sqrt{h}
    \end{array}
\right.
\end{equation}
We can thus compute 
$\displaystyle \mathbb{E}\left[P\left(\hat{S}_N \right)| \hat{W} \right]$. Using the chain rule, we get
\begin{equation}
\hat{V}=\mathbb{E}\left[P(\hat S_N)\right]
=\mathbb{E}_{\hat W}\left[\mathbb{E}_{\Delta W_{N}}\left[P(\hat S_N)|\hat W\right]\right]
\approx \frac{1}{M} \sum \limits_{m=1}^M \mathbb{E}\left[P(\hat{S}_N^{(m)})|\hat{W}^{(m)}\right]
\label{burgos:eqn17}
\end{equation}

Here with $\phi$ the normal probability density function, $\Phi$  the  normal cumulative distribution functions, $\alpha=(1+r h) \hat S_{N-1}(\hat W)$ and $\beta~=~\sigma~\sqrt{h} \hat S_{N-1}(\hat W)$, we get
\begin{equation}
\mathbb{E}(P(\hat S_N) | \hat W)=  \;
\beta \,\phi \left(\frac{\alpha-K}{\beta}\right)+
(\alpha-K) \, \Phi \left( \frac{\alpha - K}{\beta} \right)
\label{burgos:formula18}
\end{equation}
This expected payoff is infinitely differentiable with respect to the input parameters. We can apply the pathwise sensitivities technique to this smooth function at time $(N-1)\,h$.
The multilevel estimator for the Greek is then
\begin{equation} \label{burgos:greekpwscondexp}
\hat Y_l = \frac{1}{N_l}\sum_1^{N_l} \left[ 
\left(\frac{\partial \hat P_f}{\partial \theta}^{(i)}\right)^{(l)} - 
\left(\frac{\partial \hat P_c}{\partial \theta}^{(i)}\right)^{(l-1)} \right]
\end{equation}

At the fine level we use \eqref{burgos:formula18} with $h=\displaystyle h_f$ and $\displaystyle \hat W_f=(\Delta W_1,\Delta W_2,\ldots,\Delta W_{N_f-1})$ to get $\displaystyle \mathbb{E}(P(\hat S_{N_f}) | \hat W_f)$  . We then use
\begin{equation}
\displaystyle \left(\frac{\partial \hat P_f}{\partial \theta}\right)^{(l)}=\frac{ \partial \hat S_{N_f-1}}{\partial \theta}\frac{\partial \mathbb{E}(P(\hat S_{N_f}) | \hat W_f)}{\partial S_{N_f-1}}+\frac{\partial \mathbb{E}(P(\hat S_{N_f}) | \hat W_f)}{\partial \theta}
\end{equation}

At the coarse level, directly using $\mathbb{E}(P(\hat S_{N_c}) | \hat W_c)$  leads to an unsatisfactorily low convergence rate of $\mathbb V(\hat Y_l)$.
As explained in \eqref{burgos:NewMultilevelEstimator} we use a modified estimator. The idea is to include the final fine Brownian increment in the computation of the expectation over the last coarse timestep. This guarantees that the two paths will be close to one another and helps achieve better variance convergence rates.

$\hat S$ still follows a simple Brownian motion with constant drift and volatility on all coarse steps. With   $\hat W_c=(\Delta W_1+\Delta W_2,\ldots,\Delta W_{N_f-3}+\Delta W_{N_f-2})$ and given that the Brownian increment on the first half of the final step is $\Delta W_{N_f-1}$ , we get

\begin{equation} 
p(\hat{S}_{N_c} | \hat W_c,\Delta W_{N_f-1}) = \frac{1}{\sigma_{\hat W_c} \sqrt{2 \pi}} \, \exp \left( - \frac{ \left( \hat{S}_{N_c} - \mu_{\hat W_c} \right)^2 }{2 \sigma_{\hat W_c}^2}\right) \label{burgos:probadis2}
\end{equation}
\begin{equation}  \nonumber \textrm{with }\left\{
    \begin{array}{ll}
        \displaystyle \mu_{\hat W_c}=\hat{S}_{N_c-1}(\hat W_c)+a \left( \hat{S}_{N_c-1} , (N_c-1) h_c \right) h_c 
        + b \left( \hat{S}_{N_c-1} , (N_c-1) h_c \right) \Delta W_{N_f-1}\\
        \displaystyle \sigma_{\hat W_c}=b \left( \hat{S}_{N_c-1}, (N_c-1)h_c \right) \sqrt{h_c/2}
    \end{array}
\right.
\end{equation}
From this distribution we derive $\mathbb{E}\left[P(\hat{S}_{N_c})|\hat{W}_c,\Delta W_{N_f-1}\right]$, which leads to the same payoff formula as before with $\displaystyle \alpha_c=(1+r\,h_c+\sigma \Delta W_{N_f-1})\, \hat S_{N_c-1}(\hat W_c)$ and \linebreak $\displaystyle \beta_c=\sigma \sqrt{h_c}\, \hat S_{N_c-1}(\hat W_c)$. Using it as the coarse level's payoff does not introduce any bias. Using the tower property we check that it satisfies condition \eqref{burgos:TelescopeCondition},
\begin{equation}
\mathbb{E}_{\Delta W_{N_f-1}}\left[\mathbb{E}\left[P(\hat{S}_{N_c})|\hat{W}_c,\Delta W_{N_f-1}\right]|\hat{W}_c\right]
=\mathbb{E}\left[P(\hat{S}_{N_c})|\hat{W}_c\right] \nonumber
\end{equation}

\subsubsection*{Estimated complexity and analysis}

Our numerical experiments show the benefits of the conditional expectation technique on the European call:
\begin{figure}[!h]
\centering
\caption{\label{fig_PwS_CondExp_Euro} Pathwise sensitivities and conditional expectations, \\European call : $\mathbb V (\hat Y_l)(l) $}
\includegraphics[width=.9\textwidth]{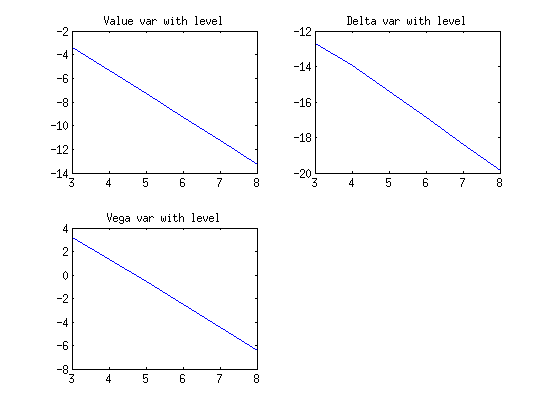}
\end{figure}

\begin{table}[!h]
\centering
\caption{\label{table_PwS_CondExp_Euro} Pathwise sensitivities and conditional expectations, \\European call : estimated complexity }
\begin{tabular}{|c|c|c|}
\hline \textrm{Estimator} & $\beta$ & \textrm{MLMC Complexity}\\ 
\hline \textrm{Value} & $\approx 2.0$ & $O(\epsilon^{-2})$\\ 
\hline \textrm{Delta} & $\approx 1.5$ & $O(\epsilon^{-2})$\\ 
\hline \textrm{Vega} & $\approx 2.0$ & $O(\epsilon^{-2})$\\ 
\hline 
\end{tabular} 
\end{table}
\newpage 
A fraction $O(\sqrt{h_l})$ of the paths arrive in the area around the strike where the conditional expectation $\displaystyle\frac{\partial \mathbb{E}(P(\hat S_{N}) | \hat W)}{ \partial \hat S_{N_f-1}}$ is neither close to 0 nor 1. In this area, its slope is $O(h_l^{-1/2})$. The coarse and fine paths differ by $O(h_l)$, we thus have $O(\sqrt{h_l})$ difference between the coarse and fine Greeks' estimates. Reasoning as in \cite{burgos:giles07b} we get  $\mathbb{V}_{\hat W}(\mathbb{E}_{\Delta W_N}(...|\hat W))=O(h_l^{3/2})$ for the Greeks' estimators. This is the convergence rate observed for $\delta$; the higher convergence rate of $\nu$ is not explained yet by this rough analysis and will be investigated in our future research.

The main limitation of this approach is that in many situations it leads to complicated integral computations. Path splitting, to be discussed next, may represent a useful numerical approximation to this technique.

\subsection{Split pathwise sensitivities}
\label{burgos:splitpws}
This technique is based on the previous one. The idea is to avoid the tricky computation of 
$\mathbb{E}\left[P(\hat S_{N_f}) | \hat W_f\right]$ and
$\mathbb{E}\left[P(\hat{S}_{N_c})|\hat{W}_c,\Delta W_{N_f-1}\right]$. We get numerical estimates of these values by ``splitting" every path simulation on the final timestep.

At the fine level: for every simulated path $\hat W_f=(\Delta W_1,\Delta W_2,\ldots,\Delta W_{N_f-1})$, we simulate a set of $d$ final increments
$(\Delta W_{N_f}^{(i)})_{i \in [1,d]}$ which we average to get
\begin{equation}
\displaystyle \mathbb{E}\left[P(\hat S_{N_f}) | \hat W_f\right] \approx 
\frac{1}{d}\, \sum_{i=1}^{d}P(\hat S_{N_f}(\hat W_f,  \Delta W_{N_f}^{(i)})) 
\end{equation}

At the coarse level we use $\hat W_c=(\Delta W_1+\Delta W_2,\ldots,\Delta W_{N_f-3}+\Delta W_{N_f-2})$. As before (still assuming a constant drift and volatility on the final coarse step), we improve the convergence rate of $\mathbb V(\hat Y_l)$ by reusing $\Delta W_{N_f-1}$ in our estimation of $\mathbb{E}\left[P(\hat{S}_{N_c})|\hat{W}_c\right]$ . We can do so by constructing the final coarse increments as
$(\Delta W_{N_c}^{(i)})_{i \in [1,d]}=(\Delta W_{N_f-1}+(\Delta W_{N_f}^{(i)}))_{i \in [1,d]}$ and using these to estimate
\begin{equation}
\displaystyle \mathbb{E}(P(\hat S_{N_c}) | \hat W_c) =
\mathbb{E}\left[P(\hat{S}_{N_c})|\hat{W}_c,\Delta W_{N_f-1}\right] \nonumber
\approx 
\frac{1}{d}\, \sum_{i=1}^{d}P(\hat S_{N_c}(\hat{W}_c,\Delta W_{N_c}^{(i)}))
\end{equation}

To get the Greeks, we simply compute the corresponding pathwise sensitivities.

\subsubsection*{Estimated complexity and choice of the number of splittings}

\begin{figure}[h!]
\centering
\caption{\label{fig_PwS_Split} Pathwise sensitivities and path splitting, \\European call : $\mathbb V (\hat Y_l)(l) $}
\includegraphics[width=.9\textwidth]{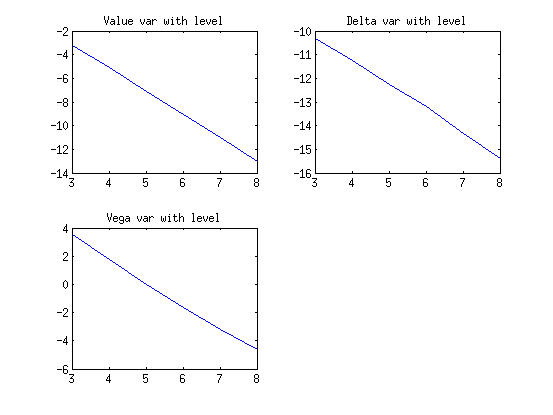}
\end{figure}
\begin{table}[h!]
\centering
\caption{\label{table_PwS_Split} Pathwise sensitivities and path splitting, \\European call : estimated complexity }
\begin{tabular}{|c|c|c|c|}
\hline \textrm{Estimator} &$d $& $\beta$ & \textrm{MLMC Complexity}\\ 
\hline \textrm{Value}
					  & $10 $& $\approx 2.0$ & $O(\epsilon^{-2})$\\
					  & $500 $& $\approx 2.0$ & $O(\epsilon^{-2})$\\
\hline \textrm{Delta} 
					  & $10 $& $\approx 1.0$ & $O(\epsilon^{-2}(\log \epsilon)^2)$\\
					  & $500 $& $\approx 1.5$ & $O(\epsilon^{-2})$\\
\hline \textrm{Vega} 
					  & $10 $& $\approx 1.6$ & $O(\epsilon^{-2})$\\
					  & $500 $& $\approx 2.0$ & $O(\epsilon^{-2})$\\
\hline 
\end{tabular} 
\end{table}
As expected this method yields higher values of $\beta$ than simple pathwise sensitivities: the convergence rates increase and tend to the rates offered by conditional expectations as $d$ increases and the approximation gets more precise.

Taking a constant number of splittings $d$ for all levels is actually not optimal; For Greeks, we can write the variance of the estimator as
\begin{align}\nonumber
\displaystyle \mathbb{V}(\hat Y_l)=
&\displaystyle \frac{1}{N_l}\mathbb{V}_{\hat W_f}(\mathbb{E}(\left(\frac{\partial \hat P_f}{\partial \theta}\right)^{(l)} - \left(\frac{\partial \hat P_c}{\partial \theta}\right)^{(l-1)} |\hat W_f))\\
&\displaystyle +\frac{1}{N_l\,d}\mathbb{E}_{\hat W_f}(\mathbb{V}(\left(\frac{\partial \hat P_f}{\partial \theta}\right)^{(l)} - \left(\frac{\partial \hat P_c}{\partial \theta}\right)^{(l-1)} |\hat W_f))
\end{align}
As explained in section \ref{burgos:pwscondexp} we have $\mathbb{V}_{\hat W_f}(\mathbb{E}(...|\hat W_f))=O(h_l^{3/2})$ for the Greeks. We also have $\mathbb{E}_{\hat W_f}(\mathbb{V}(...|\hat W_f))=O(h_l)$ for similar reasons. We optimise the variance at a fixed computational cost by choosing $d$ such that the two terms of the sum are of similar order. Taking $d=O(h_l^{-1/2})$ is therefore optimal.


\subsection{Vibrato Monte Carlo}

Since the previous method uses pathwise sensitivity analysis, it is not
applicable when payoffs are discontinuous. To address this limitation, we 
use the Vibrato Monte Carlo method introduced by Giles \cite{burgos:giles09}.
This hybrid method combines pathwise sensitivities and the Likelihood Ratio Method. 

We consider again equation \eqref{burgos:eqn17}. We now use the Likelihood Ratio Method on the last timestep and with the notations of section \ref{burgos:pwscondexp} we get
\begin{equation}
 \frac{\partial \hat V}{\partial \theta} =
\mathbb{E}_{\hat W} \left[ \mathbb{E}_{\Delta W_N} \left[ P \left( \hat{S}_N\right) \frac{\partial (\log p(\hat{S}_N | \hat W))}{\partial \theta}  | \hat W \right] \right]
\end{equation}
We can write $p(\hat{S}_N | \hat W))$ as $p( \mu_{\hat W},\sigma_{\hat W}) $. This leads to the estimator
\begin{eqnarray}
\nonumber\displaystyle  \frac{\partial \hat V}{\partial \theta} \approx
\frac{1}{N_l} \sum\limits_{m=1}^{N_l} & ( \displaystyle\frac{\partial \mu_{\hat W^{(m)}}}{\partial \theta} \mathbb{E}_{\Delta W_N} \left[ P \left( \hat{S}_N \right) \frac{\partial (\log p)}{\partial \mu_{\hat W}} |\hat W^{(m)} \right] \\ 
 \displaystyle & + \displaystyle\frac{\partial \sigma_{\hat W^{(m)}}}{\partial \theta} \mathbb{E}_{\Delta W_N} \left[ P \left( \hat{S}_N \right) \frac{\partial (\log p)}{\partial \sigma_{\hat W}} |\hat W^{(m)} \right]  )
 \label{burgos:vmcestimator}
\end{eqnarray}
We compute $\displaystyle \frac{\partial \mu_{\hat W^{(m)}}}{\partial \theta}$ and $\displaystyle \frac{\partial \sigma_{\hat W^{(m)}}}{\partial \theta}$ with pathwise sensitivities. 
\\With $\displaystyle \hat{S}_N^{(m,i)}=\hat{S}_N(\hat W^{(m)},\Delta W_N^{(i)}) $, we substitute the following estimators into \eqref{burgos:vmcestimator}

\begin{equation} \label{burgos:vmcestimator2}
\left\{\begin{array}{ll}
\displaystyle \mathbb{E}_{\Delta W_N} \left[ P \left( \hat{S}_N \right) \frac{\partial (\log p)}{\partial \mu_{\hat W}} |\hat W^{(m)} \right] \approx
\frac{1}{d} \sum\limits_{i=1}^d \left( P \left( \hat{S}_N^{(m,i)} \right) \frac{\hat{S}_N^{(m,i)} - \mu_{\hat W^{(m)}}}{\sigma_{\hat W^{(m)}}^2} \right)
\\
\displaystyle 
\mathbb{E}_{\Delta W_N} \left[ P \left( \hat{S}_N \right) \frac{\partial (\log p)}{\partial \sigma_{\hat W}} |\hat W^{(m)} \right] \approx
\frac{1}{d} \sum\limits_{i=1}^d P \left( \hat{S}_N^{(m,i)}\right) \left( -\frac{1}{\sigma_{\hat W^{(m)}}} + \frac{\left( \hat{S}_N^{(m,i)} - \mu_{\hat W^{(m)}}\right)^2}{\sigma_{\hat W^{(m)}}^3}\right)
\end{array}
\right.\nonumber
\end{equation}

In a multilevel setting: at the fine level we can use \eqref{burgos:vmcestimator} directly.
At the coarse level, for the same reasons as in section  \ref{burgos:splitpws}, we reuse the fine brownian increments to get efficient estimators. We take
\begin{equation}
\left\{\begin{array}{ll}
\displaystyle \hat W_c=(\Delta W_1+\Delta W_2,\ldots,\Delta W_{N_f-3}+\Delta W_{N_f-2})\\
\displaystyle (\Delta W_{N_c}^{(i)})_{i \in [1,d]}=(\Delta W_{N_f-1}+(\Delta W_{N_f}^{(i)}))_{i \in [1,d]}
\end{array}
\right.
\end{equation}
We use the chain rule to verify that condition \eqref{burgos:TelescopeCondition} is verified on the last coarse step. With the notations of equation \eqref{burgos:probadis2} we derive the following estimators
\begin{equation}
\begin{array}{ll}
\displaystyle \mathbb{E}_{\Delta W_{N_c}} \left[ P \left( \hat{S}_{N_c} \right) \frac{\partial (\log p_c)}{\partial \mu_{\hat W_c}} |\hat W_c^{(m)} \right] 
& = \displaystyle
\mathbb{E}
\left[\mathbb{E}\left[P \left( \hat{S}_{N_c} \right) \frac{\partial (\log p_c)}{\partial \mu_{\hat W_c}}|\hat{W}_c^{(m)},\Delta W_{N_f-1}\right]|\hat{W}_c^{(m)}\right]
\\
&  \approx
\displaystyle \frac{1}{d} \sum\limits_{i=1}^d \left( P \left( \hat{S}_{N_c}^{(m,i)} \right) \frac{\hat{S}_{N_c}^{(m,i)} - \mu_{\hat W_c^{(m)}}}{\sigma_{\hat W_c^{(m)}}^2} \right)\\
\displaystyle \mathbb{E}_{\Delta W_{N_c}} \left[ P \left( \hat{S}_{N_c} \right) \frac{\partial (\log p)}{\partial \sigma_{\hat W_c}} |\hat W_c^{(m)} \right] &= \displaystyle
\mathbb{E} 
\left[\mathbb{E}\left[ P \left( \hat{S}_{N_c} \right) \frac{\partial (\log p)}{\partial \sigma_{\hat W_c}} |\hat{W}_c^{(m)},\Delta W_{N_f-1}\right]|\hat{W}_c^{(m)}\right]\\
& \approx
\displaystyle \frac{1}{d} \sum\limits_{i=1}^d P \left( \hat{S}_{N_c}^{(m,i)}\right) \left( -\frac{1}{\sigma_{\hat W_c^{(m)}}} + \frac{\left( \hat{S}_{N_c}^{(m,i)} - \mu_{\hat W_c^{(m)}}\right)^2}{\sigma_{\hat W_c^{(m)}}^3}\right)
\end{array}
\end{equation}

\subsubsection*{Estimated complexity}
Our numerical experiments show the following convergence rates for $d=10$:

\begin{figure}[h!]
\centering
\caption{\label{fig_VMC_Euro} Vibrato Monte Carlo, European call : $\mathbb V (\hat Y_l)(l) $}
\includegraphics[width=.9\textwidth]{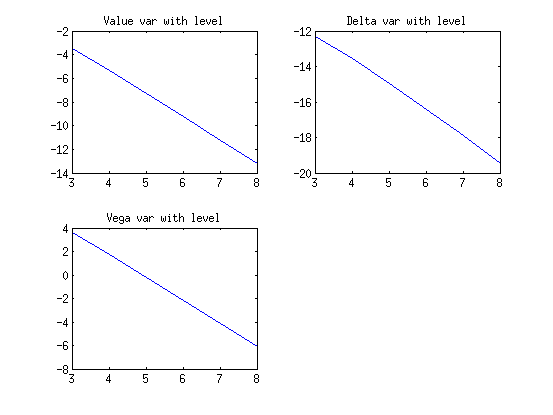}
\end{figure}

\begin{table}[h!]
\centering
\caption{\label{table_VMC_Euro} Vibrato Monte Carlo, European call : estimated complexity }
\begin{tabular}{|c|c|c|}
\hline \textrm{Estimator} & $\beta$ & \textrm{MLMC Complexity}\\ 
\hline \textrm{Value} & $\approx 2.0$ & $O(\epsilon^{-2})$\\ 
\hline \textrm{Delta} & $\approx 1.5$ & $O(\epsilon^{-2})$\\ 
\hline \textrm{Vega} & $\approx 2.0$ & $O(\epsilon^{-2})$\\ 
\hline 
\end{tabular} 
\end{table}

\newpage
As in section \ref{burgos:splitpws}, this is an approximation of the conditional expectation technique, and so getting the same convergence rates was expected.

\section{European digital call}
\label{burgos:3}
The European digital call's payoff is $P=\mathbf{1}_{S_T>K}$. The discontinuity of the payoff makes the computation of Greeks more challenging. We cannot apply pathwise sensitivities, and so we use conditional expectations or Vibrato Monte Carlo.
\subsection{Pathwise sensitivities and conditional expectations}

With the same notation as in section \ref{burgos:pwscondexp} we compute the conditional expectations of the digital call's payoff.
\begin{equation}
\nonumber \displaystyle \mathbb{E}(P(\hat S_{N_f}) | \hat W)=\Phi \left( \frac{\alpha-K}{\beta}\right)
\qquad \displaystyle \mathbb{E}(P(\hat S_{N_c}) | \hat W_c,\Delta W_{N_f-1})=\Phi \left( \frac{\alpha_c-K}{\beta_c}\right)
\end{equation}

The simulations give figure \ref{fig_PwS_CondExp_Dig} and table \ref{table_PwS_CondExp_Dig}.
\begin{figure}[h!]
\centering
\caption{\label{fig_PwS_CondExp_Dig} Pathwise sensitivities and conditional expectations, \\digital call: $\mathbb V (\hat Y_l)(l) $}
\includegraphics[width=.9\textwidth]{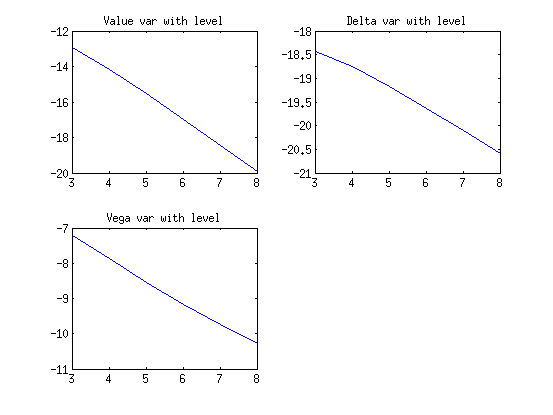}
\end{figure}

\begin{table}[h!]
\centering
\caption{\label{table_PwS_CondExp_Dig} Pathwise sensitivities and conditional expectations, \\digital call : estimated complexity }
\begin{tabular}{|c|c|c|}
\hline \textrm{Estimator} & $\beta$ & \textrm{MLMC Complexity}\\ 
\hline \textrm{Value} & $\approx 1.4$ & $O(\epsilon^{-2})$\\ 
\hline \textrm{Delta} & $\approx 0.5$ & $O(\epsilon^{-2.5})$\\ 
\hline \textrm{Vega} & $\approx 0.6$ & $O(\epsilon^{-2.4})$\\ 
\hline 
\end{tabular} 
\end{table}
\newpage
\subsection{Vibrato Monte Carlo}
The Vibrato technique can be applied in the same way as with the European call. We get figure \ref{fig_VMC_Dig} and table \ref{table_VMC_Dig}.

\begin{figure}[h!]
\centering
\caption{\label{fig_VMC_Dig} Vibrato Monte Carlo, digital call : $\mathbb V (\hat Y_l)(l) $}
\includegraphics[width=.9\textwidth]{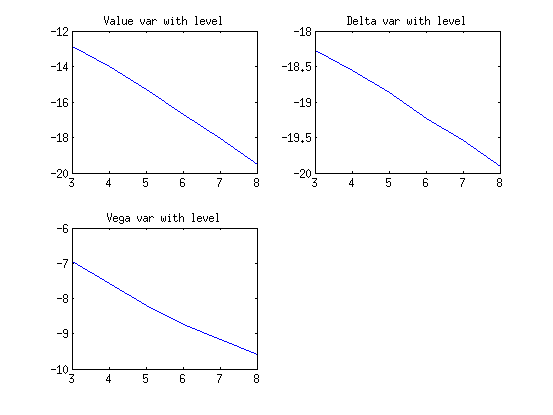}
\end{figure}

\begin{table}[h!]
\centering
\caption{\label{table_VMC_Dig} Vibrato Monte Carlo, digital call : estimated complexity }
\begin{tabular}{|c|c|c|}
\hline \textrm{Estimator} & $\beta$ & \textrm{MLMC Complexity}\\ 
\hline \textrm{Value} & $\approx 1.3$ & $O(\epsilon^{-2})$\\ 
\hline \textrm{Delta} & $\approx 0.3$ & $O(\epsilon^{-2.7})$\\ 
\hline \textrm{Vega} & $\approx 0.5$ & $O(\epsilon^{-2.5})$\\ 
\hline 
\end{tabular} 
\end{table}

\subsection{Analysis}

The analysis presented in section \ref{burgos:pwscondexp} explains why we expected $\beta=3/2$ for the value's estimator. 

A fraction $O(\sqrt{h})$ of all paths arrive in the area around the payoff where \linebreak
 $(\partial \mathbb{E}(P(\hat S_{N}) | \hat W)/ \partial \hat S_{N-1})$ is not close to 0 ; there its derivative is $O(h_l^{-1})$ and we have $|\hat S_{N_f}- \hat S_{N_c}|=O(h_l)$. For these paths, we thus have $O(1)$ difference between the fine and coarse Greeks' estimates. This explains the experimental $\beta \approx 1/2$.

\section{European lookback call}
\label{burgos:4}
The lookback call's value depends on the values that the asset takes before expiry. Its payoff is  
$\displaystyle P(T)= (S_T-\min_{t \in [0,T]}(S_t)) $.

As explained in \cite{burgos:giles07b}, the natural discretisation 
$\displaystyle \hat P = ( \hat{S}_N-\min_n \hat S_n )$
 is not satisfactory. To regain good convergence rates, we approximate the behaviour within each fine timestep $[t_n,t_{n+1}] $ of width $h_f$ as a simple Brownian motion with constant drift $a_n^f$ and volatility $b_n^f$ conditional on the simulated values $\hat S_n^f$ and $\hat S_{n+1}^f$. As shown in \cite{burgos:glasserman04} we can then simulate the local minimum
\begin{equation}
\hat S_{n,min}^f=\frac{1}{2}\left(
\hat S_n^f+ \hat S_{n+1}^f - \sqrt{\left(\hat S_{n+1}^f- \hat S_n^f \right)^2-2 (b_n^f)^2 h_f \log U_n}
\right)
\end{equation}
with $U_n$ a uniform random variable on $[0,1]$. We define the fine level's payoff this way choosing $b_n^f=b(\hat S_n^f,t_n)$ and considering the minimum over all timesteps to get the global minimum of the path.

At the coarse level we still consider a simple Brownian motion on each timestep of width $h_c=2 h_f$. To get high strong convergence rates, we reuse the fine increments by defining a midpoint value for each step
\begin{equation}
\hat S_{n+1/2}^c=\frac{1}{2} \displaystyle \left(\hat S_n^c+\hat S_{n+1}^c-b_n^c (\Delta W_{n+1} - \Delta W_{n+1/2})\right)
\end{equation}
Where $(\Delta W_{n+1} - \Delta W_{n+1/2})$ is the difference of the corresponding fine Brownian increments on  $[t_{n+1/2},t_{n+1}]$ and $[t_n,t_{n+1/2}]$. Conditional on this value, we then define the minimum over the whole step as the minimum of the minimum over each half step, that is
\begin{align}
\nonumber \hat S_{n,min}^c=\min &\left[
\frac{1}{2}\left(
\hat S_n^c+ \hat S_{n+1/2}^c - \sqrt{\left(\hat S_{n+1/2}^c- \hat S_n^c \right)^2- (b_n^c)^2 h_c \log U_{1,n}}
\right), \right.\\
&\; \left. \frac{1}{2}\left(
\hat S_{n+1/2}^c+ \hat S_{n+1}^c - \sqrt{\left(\hat S_{n+1}^c- \hat S_{n+1/2}^c \right)^2- (b_n^c)^2 h_c \log U_{2,n}}
\right)
 \right]
\end{align}
where $U_{1,n}$ and $U_{2,n}$ are the values we sampled to compute the minima of the corresponding  timesteps at the fine level. Once again we use the tower property to check that condition \eqref{burgos:TelescopeCondition} is verified and that this coarse-level estimator is adequate.

\subsection{pathwise sensitivities}
Using the treatment described above, we can then apply straighforward pathwise sensitivities to compute the multilevel estimator.  This gives the following results:

\begin{figure}[h!]
\centering
\caption{\label{fig_Lookback} Pathwise sensitivities, lookback call : $\mathbb V (\hat Y_l)(l) $}
\includegraphics[width=.9\textwidth]{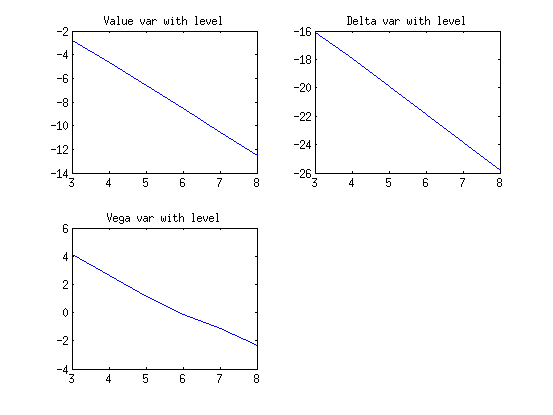}
\end{figure}

\begin{table}[h!]
\centering
\caption{\label{table_Lookback} Pathwise sensitivities, lookback call : estimated complexity }
\begin{tabular}{|c|c|c|}
\hline \textrm{Estimator} & $\beta$ & \textrm{MLMC Complexity}\\ 
\hline \textrm{Value} & $\approx 1.9$ & $O(\epsilon^{-2})$\\ 
\hline \textrm{Delta} & $\approx 1.9$ & $O(\epsilon^{-2})$\\ 
\hline \textrm{Vega} & $\approx 1.3$ & $O(\epsilon^{-2})$\\ 
\hline 
\end{tabular} 
\end{table}

Giles has proved that for the value's estimator, $\beta=2-\eta $ for all $\eta >0$. In the Black~\& Scholes model, we can prove that $V=S_0 \, \delta$. We therefore expected $\beta \approx 2$ for $\delta$ too.  The strong convergence speed of $\nu$'s estimator cannot be derived that easily and will be analysed in our future research.

\subsection{Conditional Expectations, path splitting or Vibrato Monte Carlo}
Unlike the regular call option, the payoff of the lookback call is perfectly smooth and so therefore there is no benefit from using conditional expectations and associated methods.

\section{European barrier call}
\label{burgos:5}
Barrier options are contracts which are activated or deactivated when the underlying asset $S$ reaches a certain barrier value $B$. We consider here the down-and-out call for which the payoff can be written as
\begin{equation}
\displaystyle P=(S_T-K)^+\, \mathbf{1}_{\displaystyle \min_{t \in [0,T]}( S_t)>K}
\end{equation}
Both the naive estimators and the approach used with the lookback call are unsatisfactory here: the discontinuity induced by the barrier results in a higher variance than before. Therefore we use the approach developed in \cite{burgos:giles07b} where we compute the probability $p_n$ that the minimum of the interpolant crosses the barrier within each timestep. This gives the conditional expectation of the payoff conditional on the Brownian increments of the fine path:
\begin{equation}
\hat P^{f}=(\hat S_{N_f}^{f}-K)^+\, \displaystyle \prod_{n=0}^{N_f-1} \left(1-\hat p_n^{f} \right)\\
\end{equation}
with
\begin{equation}
\hat p_n^{f}=
\exp \left(\frac{-2 (\hat S_n^{f}-B)^{+}(\hat S_{n+1}^{f}-B)^{+}}{(b_n^{f})^2\, h_f}  \right)\nonumber
\end{equation}
At the coarse level we define the payoff similarly: we first simulate a midpoint value $\hat S_{n+1/2}^c$ as before and then define $\hat p_n^{c}$ the probability of not hitting $B$ in $[t_n,t_{n+1} ]$, that is the probability of not hitting $B$ in $[t_n,t_{n+1/2}] $ and $[t_{n+1/2},t_{n+1}] $. Thus
\begin{equation}
\hat P^{c}=(\hat S_{N_c}^{c}-K)^+\, \displaystyle \prod_{n=0}^{N_c-1} \left(1-\hat p_n^{c} \right)
=
(\hat S_{N_c}^{c}-K)^+\, \displaystyle \prod_{n=0}^{N_c-1} \left((1-\hat p_{n,1})(1-\hat p_{n,2}) \right)
\end{equation}
with
\begin{equation}\left\{ \begin{array}{ll}
\nonumber \displaystyle \hat p_{n,1}=\exp \left(\frac{-2 (\hat S_n^{c}-B)^{+}(\hat S_{n+1/2}^{c}-B)^{+}}{(b_n^{c})^2\, h_f}  \right)
\\
\nonumber  \displaystyle \hat p_{n,2}=\exp \left(\frac{-2 (\hat S_{n+1/2}^{c}-B)^{+}(\hat S_{n+1}^{c}-B)^{+}}{(b_n^{c})^2\, h_f}  \right)
 \end{array} \right.
\end{equation}

\subsection{Pathwise sensitivities}
\label{burgos:51}
The multilevel estimators 
$\displaystyle
\hat Y_l=\left( \hat P^f\right)^{(l)} - \left(\hat P^c \right)^{(l-1)}
$
are Lipschitz with respect to all $\displaystyle (\hat S_n^f)_{n=1 \ldots N_f}$ and $\displaystyle (\hat S_n^c)_{n=1 \ldots N_c}$, so we can use pathwise sensitivities to compute the Greeks.
Our numerical simulations give

\begin{figure}[h!]
\centering
\caption{\label{fig_Barrier_PwS} Pathwise sensitivities, barrier call : $\mathbb V (\hat Y_l)(l) $}
\includegraphics[width=.9\textwidth]{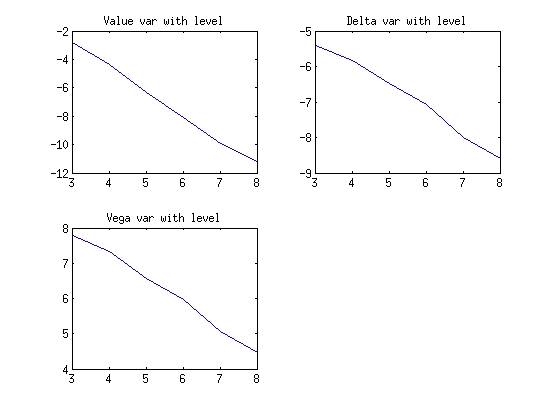}
\end{figure}

\begin{table}[h!]
\centering
\caption{\label{table_Barrier_PwS} Pathwise sensitivities, barrier call : estimated complexity }
\begin{tabular}{|c|c|c|}
\hline \textrm{Estimator} & $\beta$ & \textrm{MLMC Complexity}\\ 
\hline \textrm{Value} & $\approx 1.6$ & $O(\epsilon^{-2})$\\ 
\hline \textrm{Delta} & $\approx 0.6$ & $O(\epsilon^{-2.4})$\\ 
\hline \textrm{Vega} & $\approx 0.6$ & $O(\epsilon^{-2.4})$\\ 
\hline 
\end{tabular} 
\end{table}
\newpage
Giles proved $\beta =\frac{3}{2}-\eta$ ($\eta >0$) for the value's estimator. We are currently working on a numerical analysis supporting the observed convergence rates for the Greeks.

\subsection{Conditional Expectations}
The low convergence rates observed in the previous section come from from both the discontinuity at the barrier and from the lack of smoothness of the call around $K$. To address the latter, we can use the techniques described in section \ref{burgos:1}.  Since path splitting and Vibrato Monte Carlo offer rates that are at best equal to those of conditional expectations, we implement conditional expectations to see the maximum benefits we can get.

Computing conditional expectations is slightly trickier than in section \ref{burgos:2}. We must indeed take into account the probability that the path will hit the barrier $B$ during the final timestep. Reusing the notations of part \ref{burgos:pwscondexp} and defining

\begin{equation}
\left\{ \begin{array}{l}
\displaystyle\tilde{\alpha}=2B+(-1+rh)\, \hat S_{N-1}(\hat{W})\\
\displaystyle L=\max(K,B)\\
\displaystyle D(\sigma,\hat{S}_{N-1})=\exp \left(\frac{2r (B-\hat S_{N-1})}{\sigma^2 \hat S_{N-1}}\right)\\
\end{array}\right.
\end{equation}
at the fine level we get
\begin{align}
\mathbb{E}(P(\hat S_N|\hat W))=&
(\alpha -K) \Phi \left( \frac{\alpha-L}{\beta}\right)+
\frac{\beta}{\sqrt{2 \pi}} 
\exp \left(-\frac{(L-\alpha)^2}{2\beta^2}\right)\\
\nonumber&-D(\sigma,\hat{S}_{N-1}) 
\left[(\tilde \alpha-K) \Phi 
\left( \frac{\tilde \alpha-L}{\beta}\right)
+\frac{\beta}{\sqrt{2 \pi}}\exp\left(-\frac{(L-\tilde\alpha)^2}{2\beta^2}\right)\right]
\end{align}
As before we then adapt this formula to the coarse level to compute 
\begin{equation}
\mathbb{E}(P(\hat S_{N_c}) | \hat W_c)=
\mathbb{E}(\mathbb{E}(P(\hat S_{N_c}) | \hat W_c, \Delta W_{N_f-1})| \hat W_c)
\end{equation}
Doing so actually leads to long impractical formulae, especially when computing the Greeks. The idea of the conditional expectation technique is to smoothen the payoff. We can quickly estimate the method's maximum benefits by replacing the true payoff by a smooth Lipschitz approximation: this introduces a bias but also eliminates all the problems due to the lack of regularity around the strike.

For example we can replace the payoff 
$
\displaystyle P=(S_T-K)^+\, \mathbf{1}_{\displaystyle \min_{t \in [0,T]}( S_t)>K}
$
by the smooth approximation
\begin{equation}
\displaystyle \tilde P=\left(\frac{\beta}{\sqrt{2\pi}}\,\exp\left(-\frac{(K-S_T)^2}{2 \beta^2}\right)
+(S_T-K) \Phi \left(\frac{S_T-K}{\beta}\right)
\right)
\, \mathbf{1}_{\displaystyle \min_{t \in [0,T]}( S_t)>K}
\end{equation}
where $ \displaystyle \beta = \sigma \sqrt{h^*} S_T$ for some arbitrary $h^*$ that controls the width of the smoothing. For example we take $h^*=1/64$ and we obtain the following results:
\begin{figure}[h!]
\centering
\caption{\label{fig_Barrier_PwS_Smoothing} Pathwise sensitivities and payoff smoothing, \\barrier call : $\mathbb V (\hat Y_l)(l) $}
\includegraphics[width=.9\textwidth]{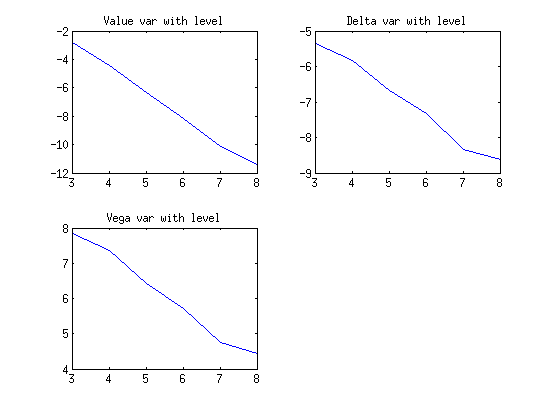}
\end{figure}

\begin{table}[h!]
\centering
\caption{\label{table_Barrier_PwS_Smoothing} Pathwise sensitivities and payoff smoothing, \\barrier call : estimated complexity }
\begin{tabular}{|c|c|c|}
\hline \textrm{Estimator} & $\beta$ & \textrm{MLMC Complexity}\\ 
\hline \textrm{Value} & $\approx 1.7$ & $O(\epsilon^{-2})$\\ 
\hline \textrm{Delta} & $\approx 0.7$ & $O(\epsilon^{-2.3})$\\ 
\hline \textrm{Vega} & $\approx 0.7$ & $O(\epsilon^{-2.3})$\\ 
\hline 
\end{tabular} 
\end{table}

We see in figure \ref{fig_Barrier_PwS_Smoothing} and table \ref{table_Barrier_PwS_Smoothing} that the maximum benefits of these techniques are only marginal. The barrier appears to be responsible for most of the variance of the multilevel estimators.

\subsection{Non-constant timestepping}

As illustrated in figure \ref{fig_Barrier_B}, the level at which $\mathbb V (\hat Y_l)$ reaches its asymptotic convergence speed depends on the value of $B$. When $B$ is far from $S_0$, the regime appears quickly (figure \ref{fig_Barrier_B:B85}), when $B$ gets closer to $S_0$, it takes longer (figure \ref{fig_Barrier_B:B95}). Practically this can be a problem when $B \approx S_0$ as the simulations may not reach the very fine levels at which the complexity analysis based on the asymptotic value of $\beta$ is relevant.

\begin{figure}[h!]
\centering
\caption{\label{fig_Barrier_B} Pathwise sensitivities (Vega), barrier call : $\mathbb V (\hat Y_l)(l) $}
\subfloat[$B=85, S_0=100$]{\label{fig_Barrier_B:B85}\includegraphics[width=0.45\textwidth]{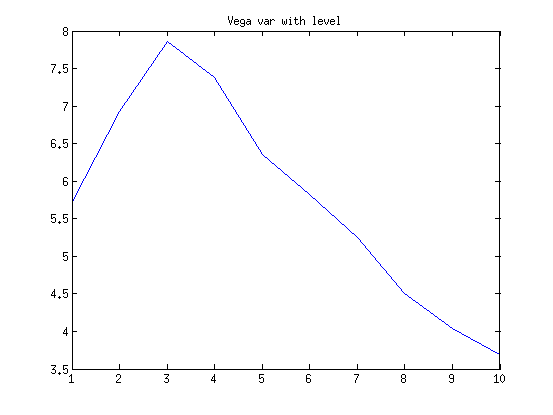}}                
\subfloat[$B=95, S_0=100$]{\label{fig_Barrier_B:B95}\includegraphics[width=0.45\textwidth]{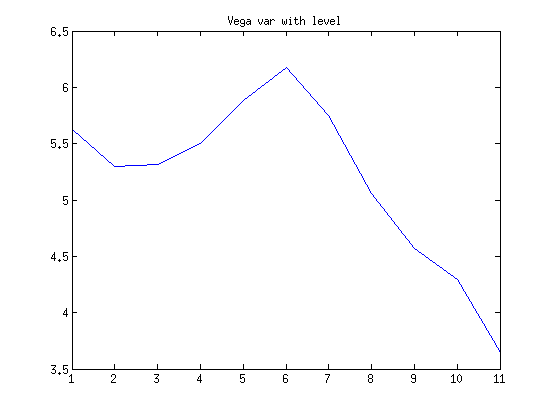}}
\end{figure}

In the case represented in figure \ref{fig_Barrier_B:B95} the variance first increases before eventually converging towards 0. This illustrates the fact that in some cases it may be interesting to start the multilevel algorithm at a level $l>0$ where problems related specifically to coarseness do not appear and where the variance is low enough for our application (it must be at least lower than than the variance of the equivalent monolevel estimator).

The variance's bad behaviour is related to the distribution of paths leaking out of the barrier over time: we plot in figure \ref{fig_Barrier_cross} the density of first barrier crossings on the time interval $[0,T]$.
\begin{figure}[h!]
\centering
\caption{\label{fig_Barrier_cross} Density of first barrier crossings on [0,1]}
\subfloat[$B=85, S_0=100$]{\label{fig_Barrier_cross:B85}\includegraphics[width=0.45\textwidth]{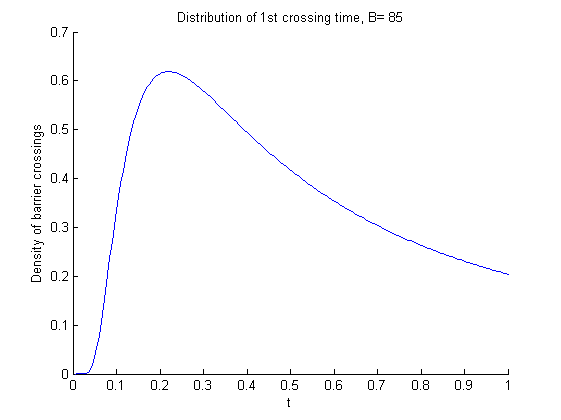}}                
\subfloat[$B=95, S_0=100$]{\label{fig_Barrier_cross:B95}\includegraphics[width=0.45\textwidth]{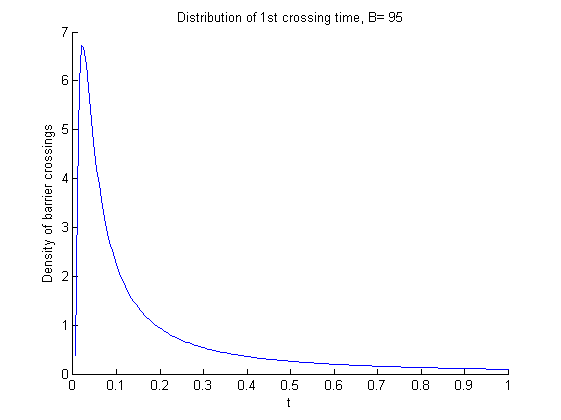}}
\end{figure}
We can show analytically that the width of the observed ``peak'' of the crossing density function is 
\begin{equation}
\tau=O\left(\frac{\log(S_0/B)}{\sigma} \right)^2 
\end{equation} 
This means that as $B$ tends to $S_0$, almost all paths going ``down and out'' do so in a very short time interval $[0,\tau]$. On timesteps $[t_n,t_{n+1}]$ outside of this interval, most paths are far away from the barrier and both $\frac{\partial \hat p_n}{\partial \theta}$ and $\frac{\partial p_n}{\partial \theta}$ are close to $0$ and hardly contribute to the variance of the multilevel estimator.

Morally the problem at the low levels is that the timesteps are much too large compared to the characteristic time $\tau$: the interval that is responsible for most of $\mathbb V(\hat Y_l)$ is covered by only one step as long as $h_l \ge \tau$. 

We hope to address this issue with adapted timestepping. Instead of taking constant timesteps of width $h=T/N$ we use power timesteps to refine the discretisation in the time interval $[0,\tau]$. We write $t=u^\gamma$ and then split $[0,T^{(1/\gamma)}] $ into equal steps of $u$. We want more steps in  $[0,\tau]$. Taking half of all timesteps in $[0,\tau] $ means that we must choose $\gamma$ such that $u=1/2$ corresponds to $\tau$, that is
\begin{equation}
\left(\frac{1}{2}\right)^{\gamma}=\left(\frac{\log(S_0/B)}{\sigma}\right)^2
\Rightarrow \gamma=\frac{2}{\log 2} \left[\log \sigma - \log \left(\log (S_0/B) \right) \right]
\end{equation}

Figure \ref{fig_Barrier_powerstep} shows for $B=95$ the evolution of $\mathbb V(\hat Y_l)$ for  $\delta$ (fig. \ref{fig_Barrier_powerstep:delta}) and for $\nu$ (fig. \ref{fig_Barrier_powerstep:vega}) with constant timesteps (red) and with power timesteps (blue). We see that with power timesteps, the variance is lower at fine levels and reaches its asymptotic convergence speed faster than before. Nevertheless we note that these benefits may be practically cancelled by the higher variance of the method at the coarsest levels, especially if we need rough estimates and stay at very low levels.

\begin{figure}[H]
\centering
\caption{\label{fig_Barrier_powerstep} $\mathbb V(\hat Y_l)$, $B=95$, power (blue) and constant (red) steps}
\subfloat[$\delta$]{\label{fig_Barrier_powerstep:delta}\includegraphics[width=0.5 \textwidth]{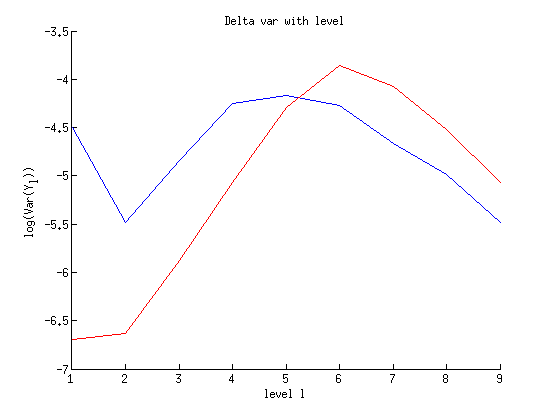}}
\subfloat[$\nu$]{\label{fig_Barrier_powerstep:vega}\includegraphics[width=0.5\textwidth]{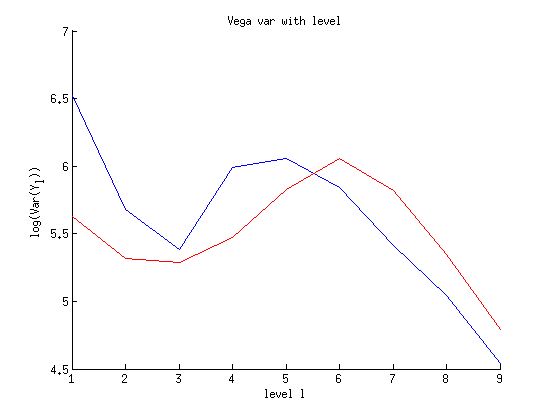}}
\end{figure}

\section*{Conclusion and future work}
In this paper we have shown for a range of cases how multilevel techniques could be used to reduce the computational complexity of Monte Carlo Greeks.

Smoothing a Lipschitz payoff with conditional expectations reduces the complexity down to $O(\epsilon^{-2})$. From this technique we derive the Path splitting and Vibrato methods: they offer the same efficiency and avoid intricate integral computations. Payoff smoothing and Vibrato  also enable us to extend the computation of Greeks to discontinuous payoffs where the pathwise sensitivity approach is not applicable. Numerical evidence shows that with well-constructed estimators these techniques provide computational savings even with exotic payoffs.

So far we have mostly relied on numerical estimates of $\beta$ to estimate the complexity of the algorithms. Our current analysis is somewhat crude ; this is why our current research now focuses on getting a rigorous numerical analysis of the algorithms' complexity.

%

%


\end{document}